%
\input ./style/arxiv-vmsta.cfg
\documentclass[numbers,compress,v1.0.1]{vmsta}

\usepackage{dsfont}
\usepackage{vmkcol}

\volume{2}
\issue{4}
\pubyear{2015}
\firstpage{355}
\lastpage{369}
\doi{10.15559/15-VMSTA43}


\startlocaldefs

\newcommand{\rrvert}{\vert}
\newcommand{\llvert}{\vert}
\urlstyle{rm}
\allowdisplaybreaks
\endlocaldefs

\newtheorem{thm}{Theorem}[section]
\newtheorem{lemma}{Lemma}[section]

\newtheorem*{nthm}{Theorem}
\newtheorem*{ntve}{Proposition}
\theoremstyle{remark}
\newtheorem{remark}{Remark}[section]
\theoremstyle{definition}

\newtheorem*{ndefin}{Definition}

\begin{document}
\begin{frontmatter}

\title{Option pricing in the model with stochastic volatility driven by Ornstein--Uhlenbeck process. Simulation}

\author{\inits{S.}\fnm{Sergii}\snm{Kuchuk-Iatsenko}\corref{cor1}}\email
{kuchuk.iatsenko@gmail.com}
\cortext[cor1]{Corresponding author.}
\author{\inits{Yu.}\fnm{Yuliya}\snm{Mishura}}\email{myus@univ.kiev.ua}
\address{Taras Shevchenko National University of Kyiv, Volodymyrska
str. 64, 01601,~Kyiv,~Ukraine}

\markboth{S. Kuchuk-Iatsenko, Yu. Mishura}{Evaluation of the price of European call option}

\begin{abstract}
We consider a discrete-time approximation of paths of an
Ornstein--Uhlenbeck process as a mean for estimation of a price of
European call option in the model of financial market with stochastic
volatility. The Euler--Maruyama approximation scheme is implemented. We
determine the estimates for the option price for predetermined sets of
parameters. The rate of convergence of the price and an average
volatility when discretization intervals tighten are determined.
Discretization precision is analyzed for the case where the exact value
of the price can be derived.
\end{abstract}

\begin{keyword}
Financial markets\sep
stochastic volatility\sep
Ornstein--Uhlenbeck process\sep
option pricing\sep
discrete-time approximations\sep
Euler--Maruyama scheme
\MSC[2010]
91B24\sep
91B25\sep
91G20
\end{keyword}


%
\received{2 December 2015}
%
\revised{10 December 2015}
%
\accepted{10 December 2015}
\publishedonline{17 December 2015}
\end{frontmatter}

\section{Introduction}
We consider a discrete-time approximation for the price of European
call option in the model of financial market with stochastic volatility
driven by the Ornstein--Uhlenbeck process. An analytic expression for
the price of the option is derived in \cite{paper17}; however, the
resulting formula is complicated and difficult to apply in most of
available software. The discrete-time approximation is ready to be
modeled even in the nonspecific software.

The problem of construction of discrete-time analogues for stochastic
volatility models of financial markets is studied in a series of works
including \cite
{paper20,paper21,paper22,paper23,paper24,paper25,paper26}. Various
techniques are implemented, for example, multilevel Monte Carlo \cite
{paper20}, conditional Monte Carlo \cite{paper22,paper26}, exact
simulation \cite{paper22,paper23}, and It\^o--Taylor approximations
\cite{paper21}.

In most of the works, authors construct discrete-time approximations
both for processes that describe the evolution of the price of asset
and for processes driving the volatility of asset price. The model
considered in this paper allows us to apply another approach: we only
discretize the volatility process. The resulting discrete-time
volatility process is then averaged in a special way and substituted
into the option pricing formula. The option price is determined
conditionally on the path of volatility process, and thus the
conditional Monte Carlo approach is used. The rate of convergence of
the option price calculated using the discrete-time volatility to the
true option price for a given trajectory of volatility process is estimated.

Discretization of the model is naturally connected with the problem of
discrete-time approximations to the solutions of stochastic
differential equations. These matters are widely investigated and
systematized in \cite{book6,book7,book8}. The simplest discrete-time
approximation is the stochastic generalization of Euler approximation
for deterministic differential equations proposed in \cite{paper18},
which is also referred to as the Euler--Maruyama scheme. Another
suitable for implementation and effective method is the Milstein scheme
\cite{paper19}. Since the model under consideration is a diffusion with
additive noise, both schemes coincide which is referred to below. It is
worth noticing that Euler and Milstein schemes both belong to the class
of It\^o--Taylor approximations and have orders of convergence 0.5 and
1, respectively. For some diffusions, the approximation schemes can be
enhanced to provide higher-order convergence, but this usually results
in great increase in computation time.

Although exact simulation provides more precision compared to the Euler
approximation, in this paper, we use the latter. This is motivated by
the fact that the Euler approximation is cheaper in terms of
computation time and by our desire to assess the rate of convergence of
conditional option prices when the volatility is discretized using the
Euler scheme.

This paper is structured as follows. We begin with the definition of
the model under consideration and the discretization scheme used. In
Section~\ref{sec3}, the prices of the European call option are compared for
discrete-time and continuous volatility processes to derive the
estimate of strong convergence order. Section~\ref{sec4} provides numeric
results of the simulation. In Section~\ref{sec5}, we demonstrate the precision
of discrete-time approximation for the case of deterministic
volatility. Appendix A contains definitions and auxiliary results on
discretization schemes and orders of their convergence mostly coming
from \cite{book6}.

\section{The model and discrete approximation of volatility process}
\label{sec2}
Let $\{\varOmega, \mathcal{F}, \mathbf{F}=\{\mathcal{F}_t^{(B,Z)},
t\geq0\}, \mathbb{P}\}$ be a complete probability space with
filtration generated by Wiener processes $\{B_t$, $Z_t$, $0 \leq t \leq
T\}$. We consider the model of the market where one risky asset is
traded, its price evolves according to the geometric Brownian motion $\{
S_t,\;0 \leq t \leq T\}, $ and its volatility is driven by a stochastic
process. More precisely, the market is described by the pair of
stochastic differential equations

\begin{equation}
\label{ModelA0} dS_t = \mu S_tdt+\sigma(Y_t)S_tdB_t,
\end{equation}
%
\begin{equation}
\label{ModelA1} dY_t = -\alpha Y_tdt+kdZ_t.
\end{equation}

We denote by $S_0=S $ and $Y_0=Y$ the deterministic initial values of
the processes specified by Eqs.~\eqref{ModelA0}--\eqref{ModelA1}.

We further impose the following assumptions:

\begin{itemize}

\item[(C1)] The Wiener processes $B$ and $Z$ are uncorrelated;

\item[(C2)] the volatility function $\sigma: \mathbb{R}\rightarrow
\mathbb{R}_+$ is measurable, bounded away from zero by a constant $c$:
\[
\sigma(x) \geq c > 0, \quad x \in\mathds{R},
\]
and satisfies the condition $\int_0^T\sigma^2(Y_t)dt<\infty$ a.s.;

\item[(C3)] the coefficients $\alpha$, $\mu$, and $k$ are positive.
\end{itemize}

For example, the conditions mentioned in assumption (C2) are satisfied
for the measurable function $\sigma(x)$ such that $c \leq\sigma^2(x)
\leq C$ for $0 < x < T$ and some constants $0<c<C$. Moreover, given the
square integrability of $\sigma(Y_s)$, the solution of differential
equation \eqref{ModelA0} is given by
%
\begin{equation}
S_t=S_0\exp{ \Biggl(\mu t - \frac{1}{2}\int
^t_0\sigma^2(Y_s)ds+
\int^t_0\sigma(Y_s)dB_s
\Biggr)},
\end{equation}
which yields that $S_t$ is continuous. Hence, the product $\sigma
(Y_s)S_t$ is square integrable: $\int_0^T\sigma^2(Y_t)S^2_tdt<\infty$ a.s.

The unique solution of the Langevin equation \eqref{ModelA1} $Y_t$ is
the so called Ornstein--Uhlen\-beck (OU) process. Its properties make
it a suitable tool for modeling volatility in financial markets. One of
the most important of the features is the mean-reversion property. The
OU process is Gaussian with the following characteristics:
\begin{eqnarray*}
E[Y_t]=Y_0\operatorname{e}^{-\alpha t}, \qquad
\operatorname{Var}[Y_t]=\frac{k^2}{2\alpha} \bigl(1-\operatorname
{e}^{-2\alpha t} \bigr).
\end{eqnarray*}

Moreover, the OU process is Markov and admits the explicit representation

\begin{equation}
\label{OU} Y_t=Y_0 \operatorname{e}^{-\alpha t}+k
\int_{0}^t \operatorname{e}^{-\alpha(t-s)}dZ_s.
\end{equation}

Following \cite{paper17}, we proceed to the risk-neutral setting
characterized by the minimal martingale measure $\mathbb{Q}$. With $r$
being the interest rate, Eqs.~\eqref{ModelA0}--\eqref{ModelA1} are now
in the following form (see Section 5 in \cite{paper17}):
%
\begin{align}
\label{ModelB} %
dS_t & =  r S_tdt+\sigma(Y_t)S_tdB^\mathbb{Q}_t, \nonumber\\
dY_t & = -\alpha Y_tdt+kdZ^\mathbb{Q}_t,
\end{align}
where
\begin{align*}
B^\mathbb{Q}_t & = B_t+\int_0^t
\dfrac{\mu-r}{\sigma(Y_s)}ds\quad\mbox{and}\quad Z^\mathbb{Q}_t =
Z_t
\end{align*}
are independent Wiener processes w.r.t.\ $\mathbb{Q}$.

This continuous-time model admits a variety of discrete-time
approximations. In this paper, we apply the familiar Euler--Maruyama
scheme, also referred to as the Euler scheme. The Euler--Maruyama
approximation to the true solution of the Langevin equation \eqref
{ModelA1} is the Markov chain $Y^{(m)}$ defined as follows:
\begin{itemize}
\item the partition of the interval $[0,T]$ into $m$ equal subintervals
of width $\Delta t=T/m$ is considered;
\item the initial value of the scheme is set: $Y^{(m)}_0=Y_0$;
\item$Y^{(m)}_{l+1}$, which we will use as a shorthand for
$Y^{(m)}_{(l+1)T/m}$, $0 \leq l \leq m-1$, is recursively defined by
%
\begin{equation}
\label{Yml} Y^{(m)}_{l+1}=(1-\alpha\Delta
t)Y^{(m)}_{l}+k \Delta Z^{\mathbb{Q}}_l,
\end{equation}
where $\Delta Z^{\mathbb{Q}}_l= Z^{\mathbb{Q}}_{(l+1)T/m}-Z^{\mathbb
{Q}}_{lT/m}$.
\end{itemize}
The continuous-time process $Y^{(m)}_t$ is a step-type process defined by
\begin{equation*}
Y^{(m)}_t=Y^{(m)}_{[tm/T]T/m}, \quad t \in[0,T],
\end{equation*}
where $[x]$ denotes an integer part of $x$.

\section{The price of European call option}\label{sec3}

The price of European call option $V$ in the initial time moment of in
model~\eqref{ModelB} is provided by
%
\begin{equation}
\label{EV_0} V=\operatorname{e}^{-rT}\mathbb{E}^{\mathbb{Q}} \bigl\{
\mathbb{E}^{\mathbb{Q}} \bigl\{ \bigl(S^{\mathbb{Q}}_T-K
\bigr)^{+}\,\big|\,Y_s, 0 \leq s\leq T \bigr\} \bigr\}.
\end{equation}

The inner expectation is conditional on the path of $Y_s,\ 0 \leq s
\leq T$, and therefore, it actually is the Black--Scholes price for a
model with deterministic time-dependent volatility. According to Lemma
2.1 in \cite{paper4}, the inner expectation in \eqref{EV_0} has the
following representation:
\begin{align}
\notag P&:=\mathbb{E}^{\mathbb{Q}} \bigl\{ \bigl(S^{\mathbb{Q}}_T-K
\bigr)^{+}\,\big|\,Y_s, 0 \leq s\leq T \bigr\}=
\operatorname{e}^{\ln{S}+rT}\varPhi(d_1)-K\varPhi(d_2)
\\
&:=\operatorname{e}^{\ln{S}+rT}\varPhi \biggl(\frac{\ln S+(r+\frac{1}{2}\bar{\sigma}^2)T-\ln K}{\bar{\sigma}\sqrt{T}} \biggr)\notag
\\
&\quad -K\varPhi \biggl(\frac{\ln S+(r-\frac{1}{2}\bar{\sigma}^2)T-\ln K}{\bar{\sigma}\sqrt{T}} \biggr),\label{IntExp}
\end{align}
where $\bar{\sigma}:=\sqrt{\frac{1}{T}\int_0^T\sigma^2(Y_s)ds} \geq
0$, $\varPhi$ is the standard normal distribution function. The
function $\bar{\sigma}$ may be viewed as the volatility averaged from
the initial moment of time to maturity. The arguments of $\varPhi$ in
\eqref{IntExp} are denoted as $d_1$ and~$d_2$.

Our aim is to estimate the error arising as a result of approximation
of the exact formula \eqref{EV_0} by application of the Euler
approximation to the process that drives volatility. Thus, we need to
assess the expectation of $R$ given by
%
\begin{equation}
R:=|P-\hat{P}_{m} |,
\end{equation}
where\ $m$ is the number of discretization points dividing the time
interval $[0,T]$ into equal intervals, $\hat{P}_{m}$ denotes the price
of the option in discrete setting calculated using the formula similar
to \eqref{IntExp}:
%
\begin{equation}
\label{V0discr} \hat{P}_{m} =\operatorname{e}^{\ln{S}+rT}\varPhi
\bigl(d^{(m)}_1 \bigr)-K\varPhi \bigl(d^{(m)}_2
\bigr),
\end{equation}
where
\begin{gather}
d^{(m)}_1=\frac{\ln S+(r+\frac{1}{2}\bar{\sigma}^2_{m})T-\ln K}{\bar
{\sigma}_{m}\sqrt{T}},
\\
d^{(m)}_2=\frac{\ln S+(r-\frac{1}{2}\bar{\sigma}^2_{m})T-\ln K}{\bar
{\sigma}_{m}\sqrt{T}},
\end{gather}
with\
%
\begin{equation}
\label{sigmadiscr} \bar{\sigma}_{m}=\sqrt{
\frac{1}{T}\sum_{l=1}^{m}
\sigma^2 \bigl(Y^{(m)}_l \bigr)
\frac{T}{m}}=\sqrt{\frac{1}{m}\sum
_{l=1}^{m}\sigma^2 \bigl(Y^{(m)}_l
\bigr)},
\\
\end{equation}
where $Y^{(m)}_l$ is defined in \eqref{Yml}.

It is unlikely that we are able to find an exact or even approximate
value for~$R$. However, what really makes interest for investigation of
the above bundle of models is the rate of convergence of the discrete
setting to the continuous one. In order to assess the rate of
convergence, the expression for an upper bound of $R$ in terms of $m$
needs to be derived.

Comparing \eqref{IntExp} and \eqref{V0discr}, we can see that the
approximation error arises solely due to the difference between $\bar
{\sigma}$ and $\bar{\sigma}_{m}$. So, the first step would assessing
the upper bound of expectation of absolute value of this difference
w.r.t.~$m$. After that, $R$ might be expressed in terms of $R_{\sigma
}:=\mathbb{E}|\bar{\sigma}-\bar{\sigma}_{m}|$.

\begin{lemma}\label{LemSigma}
Let $\sigma^2(x)$ satisfy the H\"{o}lder condition
%
\begin{equation}
\label{Holder}
\big|\sigma^2(x)-\sigma^2(y)\big|\leq L|x-y|^{\gamma},
\end{equation}
where $0< \gamma\leq1$, and $L$ is some positive constant.
Then $\mathbb{E}R_{\sigma} \leq Cm^{-0.5\gamma}$, where $C$ is some
positive constant.
\end{lemma}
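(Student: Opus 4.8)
The plan is to reduce everything to controlling $\mathbb{E}|\bar\sigma^2 - \bar\sigma^2_m|$ and then to pass from squares back to the quantity $R_\sigma = \mathbb{E}|\bar\sigma - \bar\sigma_m|$. First I would observe that, since $\sigma^2 \ge c^2 > 0$ by (C2), both $\bar\sigma$ and $\bar\sigma_m$ are bounded below by $c$, so
\[
|\bar\sigma - \bar\sigma_m| = \frac{|\bar\sigma^2 - \bar\sigma^2_m|}{\bar\sigma + \bar\sigma_m} \le \frac{1}{2c}\,\big|\bar\sigma^2 - \bar\sigma^2_m\big|.
\]
Thus it suffices to bound $\mathbb{E}\big|\bar\sigma^2 - \bar\sigma^2_m\big|$ by $Cm^{-0.5\gamma}$.

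Next I would compare the integral $\bar\sigma^2 = \frac1T\int_0^T \sigma^2(Y_s)\,ds$ with the Riemann-type sum $\bar\sigma^2_m = \frac1T\sum_{l=1}^m \sigma^2(Y^{(m)}_l)\frac Tm$. Writing the integral as a sum of integrals over the subintervals $[(l-1)T/m, lT/m]$ and using the triangle inequality together with the Hölder condition \eqref{Holder}, one gets
\[
\big|\bar\sigma^2 - \bar\sigma^2_m\big| \le \frac{1}{T}\sum_{l=1}^m \int_{(l-1)T/m}^{lT/m} \big|\sigma^2(Y_s) - \sigma^2\big(Y^{(m)}_l\big)\big|\,ds \le \frac{L}{T}\sum_{l=1}^m \int_{(l-1)T/m}^{lT/m} \big|Y_s - Y^{(m)}_l\big|^{\gamma}\,ds.
\]
Taking expectations and using that $\gamma \le 1$ so that $x\mapsto x^\gamma$ is concave (Jensen's inequality), $\mathbb{E}|Y_s - Y^{(m)}_l|^\gamma \le \big(\mathbb{E}|Y_s - Y^{(m)}_l|^2\big)^{\gamma/2}$, and the problem is reduced to an $L^2$ estimate on $Y_s - Y^{(m)}_l$ uniformly over $s$ in the $l$-th subinterval.

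The remaining work is to show $\sup_{0\le s\le T}\mathbb{E}\big|Y_s - Y^{(m)}_{[sm/T]}\big|^2 \le Km^{-1}$; combined with the above this yields $\mathbb{E}\big|\bar\sigma^2 - \bar\sigma^2_m\big| \le \frac{L}{T}\cdot T\cdot(Km^{-1})^{\gamma/2} = C m^{-0.5\gamma}$, which is exactly what is needed. This $L^2$-bound splits into two pieces: the distance between the continuous OU process $Y_s$ and the continuous Euler approximation $Y^{(m)}_s$, which is the standard strong-order-$0.5$ estimate for the Euler--Maruyama scheme (here even cleaner because equation \eqref{ModelA1} has additive noise and the Euler and Milstein schemes coincide — this is the result I would invoke from Appendix A / \cite{book6}), and the distance between the continuous Euler process and its piecewise-constant version $Y^{(m)}_{[sm/T]}$ on a single mesh interval, which contributes a term of order $\Delta t = T/m$ via the explicit increment \eqref{Yml} and the fact that $\mathbb{E}(\Delta Z^{\mathbb Q}_l)^2 = T/m$ together with uniform $L^2$-boundedness of $Y^{(m)}_l$ (immediate from the recursion since $|1-\alpha\Delta t|\le 1$ for $m$ large). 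I expect the main obstacle to be bookkeeping rather than conceptual: one must be careful that the Euler strong-convergence constant is uniform in $s\in[0,T]$ (not just at mesh points) and that the Gronwall-type argument for the OU-versus-Euler comparison does not hide an $m$-dependence — but since all coefficients are Lipschitz and bounded, this is routine. Assembling the two $O(m^{-1/2})$ bounds in $L^2$, raising to the power $\gamma/2$, and using the lower bound $\bar\sigma,\bar\sigma_m \ge c$ completes the argument.
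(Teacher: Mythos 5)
Your argument is correct and reaches the claimed rate $m^{-\gamma/2}$, but by a genuinely different route than the paper. The paper first passes from $|\bar\sigma-\bar\sigma_m|$ to $\sqrt{|\bar\sigma^2-\bar\sigma_m^2|}$ via $|\sqrt a-\sqrt b|\le\sqrt{|a-b|}$ and Cauchy--Schwarz, which halves the exponent at the end; it then compensates by invoking the order-one strong convergence of the (Milstein $=$ Euler) scheme from Appendix A in the pointwise $L^1$ form $\mathbb{E}|Y_s-Y^{(m)}_{i+1}|\le C_1 m^{-1}$ for all $s$ in the $i$-th mesh interval, raising this to the power $\gamma$ to get $m^{-\gamma}$ under the square root. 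You instead exploit the lower bound from (C2) to write $|\bar\sigma-\bar\sigma_m|\le\frac{1}{2c}|\bar\sigma^2-\bar\sigma_m^2|$ with no loss of exponent, and consequently you only need the $L^2$ estimate $\mathbb{E}|Y_s-Y^{(m)}_{[sm/T]}|^2\le Km^{-1}$, i.e.\ strong order $1/2$. The two sets of choices exactly offset and both deliver $m^{-\gamma/2}$. Your version is arguably the more robust one: the paper's bound $\mathbb{E}|Y_s-Y^{(m)}_{i+1}|\le C_1m^{-1}$ is asserted uniformly in $s$ over the mesh interval, yet for $s$ strictly inside the interval the difference contains the increment $Y_s-Y_{(i+1)T/m}$ of the OU process itself, whose $L^1$ norm is of order $m^{-1/2}$; the order-one rate holds at the grid points, not for the piecewise-constant interpolant, so the paper's stated intermediate estimate is stronger than what the cited result supports, while your $O(m^{-1/2})$ $L^2$ bound absorbs exactly this interpolation error and is the correct input. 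The one step you should still write out in full is that uniform $L^2$ estimate (splitting into the grid-point Euler error and the within-interval OU increment, as you indicate), which is indeed routine here because the drift is linear and the noise is additive.
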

\begin{proof}
Since $\bar{\sigma}_{m}$ and $\bar{\sigma}$ are both square root
functions, it is be more convenient to work with $\bar{\sigma}^2_{m}$
and $\bar{\sigma}^2$. To this end, we will use H\"{o}lder's inequality:
\begin{align*}
\mathbb{E}|\bar{\sigma}_{m}-\bar{\sigma}| &= \mathbb{E} \Biggl\llvert
\sqrt{\dfrac{1}{T}\int_0^T
\sigma^2(Y_s)ds}-\sqrt{
\frac{1}{m}\sum_{i=1}^{m}
\sigma^2 \bigl(Y^{(m)}_i \bigr)} \Biggr\rrvert
\\
&\leq\mathbb{E} \Biggl\llvert\sqrt{ \Biggl\llvert
\dfrac{1}{T}\int_0^T\sigma^2(Y_s)ds-
\frac{1}{m}\sum_{i=1}^{m}
\sigma^2 \bigl(Y^{(m)}_i \bigr) \Biggr\rrvert}
\Biggr\rrvert
\\
&\leq \Biggl(\mathbb{E} \Biggl\llvert\dfrac{1}{T}\int_0^T
\sigma^2(Y_s)ds-\frac{1}{m}\sum
_{i=1}^{m}\sigma^2 \bigl(Y^{(m)}_i
\bigr) \Biggr\rrvert \Biggr)^{1/2}.
\end{align*}
Now we represent the integral as a sum of integrals over shorter
intervals. Since the second summand does not depend on $s$, we may move
it inside the integral sign, multiplying it by the inverse to the
interval length:
\begin{align*}
\mathbb{E}|\bar{\sigma}_{m}-\bar{\sigma}| &\,{\leq}\, \Biggl(
\mathbb{E} \Biggl\llvert\sum_{i=0}^{m-1}
\Biggl(\dfrac{1}{T}\int_{iT/m}^{(i+1)T/m}
\sigma^2(Y_s)ds-\frac{1}{m}\sigma^2
\bigl(Y^{(m)}_{i+1} \bigr) \Biggr) \Biggr\rrvert
\Biggr)^{1/2}
\\
&\,{=}\, \Biggl(\mathbb{E} \Biggl\llvert\sum_{i=0}^{m-1}
\Biggl(\dfrac{1}{T}\int_{iT/m}^{(i+1)T/m}\!
\sigma^2(Y_s)ds\,{-}\,\frac{1}{m}
\frac{m}{T}\int_{iT/m}^{(i+1)T/m}\!
\sigma^2 \bigl(Y^{(m)}_{i+1} \bigr)ds \Biggr) \Biggr
\rrvert \Biggr)^{1/2}
\\
&\,{=}\, \Biggl(\mathbb{E} \Biggl\llvert\dfrac{1}{T}\sum
_{i=0}^{m-1}\int_{iT/m}^{(i+1)T/m}
\bigl(\sigma^2(Y_s)-\sigma^2
\bigl(Y^{(m)}_{i+1} \bigr) \bigr)ds \Biggr\rrvert
\Biggr)^{1/2}.
\end{align*}
We apply the H\"{o}lder property of $\sigma^2(x)$:
\begin{align*}
& \Biggl(\mathbb{E} \Biggl\llvert\dfrac{1}{T}\sum
_{i=0}^{m-1}\int_{iT/m}^{(i+1)T/m}
\bigl(\sigma^2(Y_s)-\sigma^2
\bigl(Y^{(m)}_{i+1} \bigr) \bigr)ds \Biggr\rrvert
\Biggr)^{1/2}
\\
&\quad \leq \Biggl(\dfrac{L}{T}\mathbb{E} \Biggl(\sum
_{i=0}^{m-1}\int_{iT/m}^{(i+1)T/m}
\bigl\llvert Y_s - Y^{(m)}_{i+1} \bigr
\rrvert^{\gamma}ds \Biggr) \Biggr)^{1/2}
\\
&\quad = \Biggl(\dfrac{L}{T}\sum_{i=0}^{m-1}
\int_{iT/m}^{(i+1)T/m}\mathbb{E} \bigl\llvert
Y_s- Y^{(m)}_{i+1} \bigr\rrvert^{\gamma}ds
\Biggr)^{1/2}.
\end{align*}

Recall that $ Y^{(m)}_i$ is a shorthand for $Y^{(m)}_{iT/m}=Y^{(m)}_s$,
$s \in[iT/m, (i+1)T/m)$, and Proposition from the Appendix A yields
that $\mathbb{E}\llvert Y_s - Y^{(m)}_{i+1}\rrvert \leq C_1m^{-1}$,
where $C_1$ is some positive constant. We use H\"{o}lder's inequality
to derive that $\mathbb{E}\llvert Y_s - Y^{(m)}_{i+1}\rrvert^{\gamma}
\leq C_1^{\gamma} m^{-\gamma}$ and arrive at
\begin{equation*}
\mathbb{E}|\bar{\sigma}_{m}-\bar{\sigma}| \leq \biggl(
\dfrac{L}{T}m\frac{T}{m}C_1^{\gamma}m^{-\gamma}
\biggr)^{1/2}=Cm^{-\gamma/2}
\end{equation*}
for $C:=\sqrt{L C_1^{\gamma}}$, which proves the lemma.
\end{proof}

The above lemma enables us to prove the main result of this section.
\begin{thm}
Let $\sigma^2(x)$ satisfy H\"{o}lder condition \eqref{Holder}. Then
$\mathbb{E}R \leq D m^{-\gamma/2}$, where $D$ is some positive constant.
\end{thm}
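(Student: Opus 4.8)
The plan is to reduce the statement to Lemma~\ref{LemSigma} by exploiting the observation made right after \eqref{sigmadiscr}, namely that the error $R=|P-\hat P_m|$ depends on the trajectory of $Y$ only through the two scalars $\bar\sigma$ and $\bar\sigma_m$. Concretely, I would introduce for $v>0$ the deterministic functions
\begin{align*}
d_1(v)&=\frac{\ln S+(r+\frac12 v^2)T-\ln K}{v\sqrt T}, & d_2(v)&=\frac{\ln S+(r-\frac12 v^2)T-\ln K}{v\sqrt T},
\end{align*}
and $f(v)=\operatorname{e}^{\ln S+rT}\varPhi(d_1(v))-K\varPhi(d_2(v))$, so that \eqref{IntExp} and \eqref{V0discr} read exactly $P=f(\bar\sigma)$ and $\hat P_m=f(\bar\sigma_m)$ pathwise. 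By assumption (C2) both $\bar\sigma\ge c$ and $\bar\sigma_m\ge c$ with $c>0$, so it suffices to prove that $f$ is Lipschitz on $[c,\infty)$ with a constant $L'$ depending only on $S$, $r$, $T$. Granting this, $R=|f(\bar\sigma)-f(\bar\sigma_m)|\le L'|\bar\sigma-\bar\sigma_m|=L'R_\sigma$, and taking expectations and applying Lemma~\ref{LemSigma} gives $\mathbb{E}R\le L'\,\mathbb{E}R_\sigma\le L'Cm^{-\gamma/2}=:Dm^{-\gamma/2}$.

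The heart of the argument is therefore the uniform bound on $f'$, i.e.\ on the Black--Scholes vega. First I would record the elementary identities $d_1(v)-d_2(v)=v\sqrt T$, whence $d_1'(v)-d_2'(v)=\sqrt T$, and $\operatorname{e}^{\ln S+rT}\varphi(d_1(v))=K\varphi(d_2(v))$ with $\varphi:=\varPhi'$ the standard normal density; the second follows from $\tfrac12\bigl(d_1(v)^2-d_2(v)^2\bigr)=\tfrac12\bigl(d_1(v)-d_2(v)\bigr)\bigl(d_1(v)+d_2(v)\bigr)=\ln S+rT-\ln K$. Differentiating $f$ by the chain rule, the contributions involving $d_1'(v)$ and $d_2'(v)$ collapse via this identity into $f'(v)=\operatorname{e}^{\ln S+rT}\varphi(d_1(v))\bigl(d_1'(v)-d_2'(v)\bigr)=\operatorname{e}^{rT}S\sqrt T\,\varphi(d_1(v))$. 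Since $\varphi\le(2\pi)^{-1/2}$, this yields $|f'(v)|\le L':=\operatorname{e}^{rT}S\sqrt{T/(2\pi)}$ for every $v>0$, and in particular on $[c,\infty)$; the mean value theorem then completes the proof.

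The only point requiring some care is that $d_1(v),d_2(v)$, and hence $f$, are defined only for $v>0$ and their expressions degenerate as $v\downarrow0$; this is precisely why the strict lower bound $c>0$ in (C2) is used, keeping $\bar\sigma$ and $\bar\sigma_m$ in a region where $f\in C^1$. I expect this to be the main obstacle only in a bookkeeping sense: once it is noticed that the vega bound $\operatorname{e}^{rT}S\sqrt{T/(2\pi)}$ is in fact uniform over the whole half-line $(0,\infty)$, the constants $L'$ and $D$ are manifestly independent of $m$ and of the realised path of $Y$, as the statement requires. Everything else --- checking the two displayed identities and plugging the estimate of Lemma~\ref{LemSigma} in --- is routine.
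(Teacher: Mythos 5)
Your proof is correct, and it reaches the same reduction to Lemma~\ref{LemSigma} as the paper, but by a genuinely different route in the key estimate. The paper splits $P-\hat P_m$ by the triangle inequality into the two terms $S\operatorname{e}^{rT}|\varPhi(d_1)-\varPhi(d_1^{(m)})|$ and $K|\varPhi(d_2)-\varPhi(d_2^{(m)})|$, uses the Lipschitz property of $\varPhi$ on each, and then bounds $|d_i-d_i^{(m)}|$ directly from the explicit formula, which forces it to invoke the lower bound $\bar\sigma,\bar\sigma_m\ge c$ and produces a Lipschitz constant containing $1/c^2$ (and depending on $K$ as well). You instead treat the Black--Scholes price as a single function $f$ of the root-mean-square volatility and compute its derivative, using the classical identity $\operatorname{e}^{\ln S+rT}\varphi(d_1(v))=K\varphi(d_2(v))$ to collapse the two $\varPhi$-terms into the vega $f'(v)=S\operatorname{e}^{rT}\sqrt{T}\,\varphi(d_1(v))$. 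All your intermediate identities check out ($d_1-d_2=v\sqrt T$, $\tfrac12(d_1^2-d_2^2)=\ln(S/K)+rT$), and the resulting bound $|f'|\le S\operatorname{e}^{rT}\sqrt{T/(2\pi)}$ is uniform on all of $(0,\infty)$. What your approach buys is a cleaner and sharper constant $D=C\,S\operatorname{e}^{rT}\sqrt{T/(2\pi)}$ that does not blow up as $c\downarrow 0$ and does not involve $K$, so in effect assumption (C2) is needed only to keep $\bar\sigma,\bar\sigma_m$ inside the domain where $f$ is differentiable (and even that is a formality, since the vega bound is uniform near zero); the paper's argument is more elementary but strictly weaker in the constants. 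The only stylistic caution is the clash between your $f$ (the price as a function of volatility) and the paper's $f$ (the standard normal density in its proof); since you introduce $\varphi$ for the density, this is harmless.
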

\begin{proof}
The function $\varPhi(x)$ has a continuous bounded derivative on
$\mathds{R}$; hence, we can use its Lipschitz property:
\begin{align*}
\mathbb{E}R&=\mathbb{E}\llvert P-\hat{P}_{m}\rrvert
\\
&\leq\mathbb{E} \bigl(S\operatorname{e}^{rT} \bigl\llvert
\varPhi(d_1)-\varPhi \bigl(d^{(m)}_1 \bigr)
\bigr\rrvert+K \bigl\llvert\varPhi(d_2)-\varPhi \bigl(d^{(m)}_2
\bigr) \bigr\rrvert \bigr)
\\
&\leq\mathbb{E} \Bigl(\sup_x \big|f(x)\big| \bigl(S
\operatorname{e}^{rT}\big|d_1-d^{(m)}_1\big|+K\big|d_2-d^{(m)}_2\big|
\bigr) \Bigr),
\end{align*}
where $f(x)=\frac{1}{\sqrt{2\pi}}\operatorname{e}^{-\frac{x^2}{2}}$ is
the density of the standard normal distribution.
In the above representation,
\begin{align}
\big|d_1-d^{(m)}_1\big|&= \biggl\llvert \biggl(
\frac{1}{\bar{\sigma}}-\frac{1}{\bar{\sigma}_{m}} \biggr)\frac{\ln S-\ln K + rT}{\sqrt{T}}+\frac{1}{2}
\sqrt{T}(\bar{\sigma}-\bar{\sigma}_{m}) \biggr\rrvert
\\
&\leq|\bar{\sigma}-\bar{\sigma}_{m}| \biggl\llvert\frac{1}{\bar{\sigma}\bar{\sigma}_{m}}
\frac{\ln(S/K) + rT}{\sqrt{T}}+\frac{\sqrt{T}}{2} \biggr\rrvert
\\
&\leq|\bar{\sigma}-\bar{\sigma}_{m}| \biggl\llvert\frac{\ln(S/K) + rT}{c^2\sqrt{T}}+
\frac{\sqrt{T}}{2} \biggr\rrvert,
\end{align}
where $c$ is a positive constant, and the last inequality is due to the
assumption that $\sigma(x)$ is bounded away from zero for any $x \in
\mathds{R}$ (see assumption (C2)). Hence, using Lemma~\ref{LemSigma},
we get
\begin{equation*}
\mathbb{E}\big|d_1-d^{(m)}_1\big| \leq
C_1 \mathbb{E}|\bar{\sigma}-\bar{\sigma}_{m}| \leq
D_1 m^{-\gamma/2},
\end{equation*}
where $C_1:=\llvert\frac{\ln(S/K) + rT}{c^2\sqrt{T}}+\frac{\sqrt
{T}}{2}\rrvert$ and $D_1$ are positive constants.

Similarly, $\mathbb{E}|d_2-d^{(m)}_2| \leq D_2 m^{-\gamma/2}$,
$D_2=const>0$, and we arrive at
%
\begin{equation}
\mathbb{E}R=\mathbb{E}\llvert P-\hat{P}_{m}\rrvert\leq
\frac{1}{\sqrt{2\pi}} \bigl(D_1 S\operatorname{e}^{rT}m^{-\gamma/2}+D_2
Km^{-\gamma/2} \bigr)=Dm^{-\gamma/2}
\end{equation}
for a positive constant $D$.

The theorem is proved.
\end{proof}

\section{Numeric examples}\label{sec4}

Theorem 6.1 in \cite{paper17} provides an analytic representation for
the price of European call option for the stochastic volatility model
under consideration. However, using it to calculate the price of an
option is rather difficult and time-consuming. We further present the
results of calculation of the price of European call option using
simulation techniques.

The calculation process is performed in Matlab 7.9.0 and is structured
as follows:
\begin{itemize}
\item[1.] The choice of discrete ranges of values of input parameters;
\item[2.] The choice of the function $\sigma(Y_s)$;
\item[3.] For each combination of input \xch{parameters}{parametersm} we generate 1000
trajectories of an Ornstein--Uhlenbeck process by splitting the time
interval into subintervals of length $\Delta t=0.001$ and modeling
values of the OU process at these points (that is, generating normally
distributed variables with known mean and standard deviation using
relationship \eqref{Yml}). For each trajectory, \eqref{V0discr} is
applied to calculate $\bar{\sigma}^2_{m}$ and the price of an option.
The results for all trajectories are then averaged and discounted to
provide the sample average of the price denoted by $\hat{\mathbb{E}}\hat
{P}_{m}$. The average volatility over all trajectories and time
interval is denoted by $\hat{\mathbb{E}}\bar{\sigma}^2_{m}$.
\end{itemize}

%
\begin{table}[b!]
\tabcolsep=0pt
\centering
\caption{$\sigma^2(Y_s)=a|Y_s|+b$}
\label{tab1}
\begin{tabular*}{\textwidth}{@{\extracolsep{\fill}}D{.}{.}{1.2}D{.}{.}{1.1}D{.}{.}{1.2}D{.}{.}{1.1}cD{.}{.}{1.1}cccc@{}}
\hline\\[-8pt]
\multicolumn{1}{@{}c}{$T$} & \multicolumn{1}{c}{$k$} & \multicolumn{1}{c}{$r$} & \multicolumn{1}{c}{$K$} & \multicolumn{1}{c}{$a$} & \multicolumn{1}{c}{$b$} & \multicolumn{1}{c}{$\hat{\mathbb{E}}\bar{\sigma}^2_{m}$} & \multicolumn{1}{c}{$\hat{\mathbb{E}}\hat{P}_{m}$} & $\hat{\mathbb{E}}\bar{\sigma}^2_{m}$ & {$\hat{\mathbb{E}}\hat{P}_{m}$} \\
\hline
& & & & & & \multicolumn{2}{@{}c@{}}{$\alpha=1$} & \multicolumn{2}{c}{$\alpha=100$} \\
\cline{7-8}\cline{9-10}
0.25 & 0.1 & 0    & 0.8 & 1 & 0   & 0.088 & \textbf{0.204} & 0.009 & \textbf{0.200}  \\
0.5  & 0.1 & 0    & 0.8 & 1 & 0   & 0.082 & \textbf{0.213} & 0.007 & \textbf {0.200} \\
1    & 0.1 & 0    & 0.8 & 1 & 0   & 0.073 & \textbf{0.227} & 0.007 & \textbf {0.200} \\
0.25 & 0.5 & 0    & 0.8 & 1 & 0   & 0.147 & \textbf{0.211} & 0.031 & \textbf {0.200} \\
0.5  & 0.5 & 0    & 0.8 & 1 & 0   & 0.185 & \textbf{0.235} & 0.030 & \textbf {0.201} \\
1    & 0.5 & 0    & 0.8 & 1 & 0   & 0.216 & \textbf{0.280} & 0.029 & \textbf {0.207} \\
0.25 & 1   & 0    & 0.8 & 1 & 0   & 0.264 & \textbf{0.224} & 0.059 & \textbf {0.201} \\
0.5  & 1   & 0    & 0.8 & 1 & 0   & 0.338 & \textbf{0.264} & 0.058 & \textbf {0.207} \\
1    & 1   & 0    & 0.8 & 1 & 0   & 0.412 & \textbf{0.334} & 0.058 & \textbf {0.221} \\
0.25 & 0.1 & 0.01 & 1   & 1 & 0.2 & 0.289 & \textbf{0.108} & 0.209 & \textbf{0.092}  \\
0.5  & 0.1 & 0.01 & 1   & 1 & 0.2 & 0.281 & \textbf{0.151} & 0.207 & \textbf{0.130}  \\
1    & 0.1 & 0.01 & 1   & 1 & 0.2 & 0.273 & \textbf{0.210} & 0.207 & \textbf {0.184} \\
0.25 & 0.5 & 0.01 & 1   & 1 & 0.2 & 0.346 & \textbf{0.117} & 0.231 & \textbf{0.097}  \\
0.5  & 0.5 & 0.01 & 1   & 1 & 0.2 & 0.375 & \textbf{0.172} & 0.230 & \textbf{0.137}  \\
1    & 0.5 & 0.01 & 1   & 1 & 0.2 & 0.414 & \textbf{0.254} & 0.229 & \textbf {0.193} \\
0.25 & 1   & 0.01 & 1   & 1 & 0.2 & 0.459 & \textbf{0.134} & 0.259 & \textbf{0.102}  \\
0.5  & 1   & 0.01 & 1   & 1 & 0.2 & 0.532 & \textbf{0.203} & 0.258 & \textbf {0.145} \\
1    & 1   & 0.01 & 1   & 1 & 0.2 & 0.617 & \textbf{0.305} & 0.258 & \textbf {0.204} \\
0.25 & 0.1 & 0.02 & 1.2 & 1 & 1   & 1.089 & \textbf{0.141} & 1.009 & \textbf{0.134}  \\
0.5  & 0.1 & 0.02 & 1.2 & 1 & 1   & 1.079 & \textbf{0.228} & 1.007 & \textbf{0.218}  \\
1    & 0.1 & 0.02 & 1.2 & 1 & 1   & 1.073 & \textbf{0.347} & 1.007 & \textbf {0.335} \\
0.25 & 0.5 & 0.02 & 1.2 & 1 & 1   & 1.148 & \textbf{0.147} & 1.031 & \textbf{0.136}  \\
0.5  & 0.5 & 0.02 & 1.2 & 1 & 1   & 1.178 & \textbf{0.240} & 1.030 & \textbf{0.221}  \\
1    & 0.5 & 0.02 & 1.2 & 1 & 1   & 1.216 & \textbf{0.371} & 1.029 & \textbf {0.339} \\
0.25 & 1   & 0.02 & 1.2 & 1 & 1   & 1.262 & \textbf{0.157} & 1.059 & \textbf{0.138}  \\
0.5  & 1   & 0.02 & 1.2 & 1 & 1   & 1.341 & \textbf{0.260} & 1.058 & \textbf {0.225} \\
1    & 1   & 0.02 & 1.2 & 1 & 1   & 1.414 & \textbf{0.402} & 1.058 & \textbf {0.344} \\
\hline
\end{tabular*}
\end{table}

To begin with, let us recall the notation of input parameters along
with ranges of values assigned to them in the process of simulation:

$T$ -- time to maturity, $T=0.25; \ 0.5; \ 1$;

$k$ -- volatility of OU process, $k=0.1; \ 0.5; \ 1$;

$\alpha$ -- mean-reversion rate, $\alpha=1; \ 100$;

$r$ -- interest rate, $r= 0; \ 0.01; \ 0.02$;

$K$ -- strike price, $K= 0.8; \ 1; \ 1.2$;

$S_0$ -- initial price of stock, $S_0=1$;

$Y_0$ -- initial value of OU process, $Y_0=0.1$.

In order to produce numerical results, we choose the following options
for the function $\sigma(Y_s)$:
\begin{itemize}
\item[1.] $\sigma^2(Y_s)=a|Y_s|+b$, where $a=\{0,1\}$, $b=\{0,0.2,1\}$ \xch{(Table~\ref{tab1});}{;}
\item[2.] $\sigma^2(Y_s)=\operatorname{e}^{Y_s}+c$, $c=0.02$ \xch{(Table~\ref{tab2}).}{.}
\end{itemize}

The results of simulations are split into groups by the mean-reversion
rate $\alpha$ and function $\sigma(Y_s)$. Meaningless and uninteresting
results provided by some distinct combinations of inputs are ignored.

Mean-reversion of 1 corresponds to slow reverting models, and fast
mean-rever\-ting models are characterized by $\alpha=100$. Matters of
speed of mean-reversion are addressed, for example, in \cite{book1}.

%
\begin{table}[t]
\tabcolsep=0pt
\caption{$\sigma^2(Y_s)=\operatorname{e}^{Y_s}+c$}
\label{tab2}
\begin{tabular*}{\textwidth}{@{\extracolsep{\fill}}D{.}{.}{1.2}D{.}{.}{1.1}D{.}{.}{1.2}D{.}{.}{1.1}cccc@{}}
\hline\\[-8pt]
\multicolumn{1}{c}{$T$} & \multicolumn{1}{c}{$k$} & \multicolumn{1}{c}{ $r$} & \multicolumn{1}{c}{$K$ } & $\hat{\mathbb{E}}\bar{\sigma}^2_{m}$ & {$\hat{\mathbb{E}}\hat{P}_{m}$} & $\hat{\mathbb{E}}\bar{\sigma}^2_{m}$ & {$\hat{\mathbb{E}}\hat{P}_{m}$} \\
\hline
                         &                         &                           &                         & \multicolumn{2}{c}{$\alpha=1$}                  & \multicolumn{2}{c}{$\alpha=100$}                   \\\cline{5-6}\cline{7-8}
0.25 & 0.1 & 0    & 0.8 & 1.113                          & \textbf{0.303} & 1.024                            & \textbf{0.297}  \\
0.5  & 0.1 & 0    & 0.8 & 1.103                          & \textbf{0.372} & 1.022                            & \textbf{0.363}  \\
1    & 0.1 & 0    & 0.8 & 1.088                          & \textbf{0.465} & 1.021                            & \textbf{0.456}  \\
0.25 & 0.5 & 0    & 0.8 & 1.135                          & \textbf{0.305} & 1.025                            & \textbf{0.297}  \\
0.5  & 0.5 & 0    & 0.8 & 1.131                          & \textbf{0.374} & 1.023                            & \textbf{0.363}  \\
1    & 0.5 & 0    & 0.8 & 1.119                          & \textbf{0.468} & 1.022                            & \textbf{0.456}  \\
0.25 & 1   & 0    & 0.8 & 1.184                          & \textbf{0.307} & 1.027                            & \textbf{0.297}  \\
0.5  & 1   & 0    & 0.8 & 1.212                          & \textbf{0.380} & 1.025                            & \textbf{0.363}  \\
1    & 1   & 0    & 0.8 & 1.238                          & \textbf{0.478} & 1.024                            & \textbf {0.456} \\
0.25 & 0.1 & 0.01 & 1   & 1.112                          & \textbf{0.209} & 1.024                            & \textbf{0.201}  \\
0.5  & 0.1 & 0.01 & 1   & 1.103                          & \textbf{0.291} & 1.022                            & \textbf{0.281}  \\
1    & 0.1 & 0.01 & 1   & 1.086                          & \textbf{0.401} & 1.021                            & \textbf{0.390}  \\
0.25 & 0.5 & 0.01 & 1   & 1.121                          & \textbf{0.209} & 1.025                            & \textbf{0.201}  \\
0.5  & 0.5 & 0.01 & 1   & 1.129                          & \textbf{0.294} & 1.023                            & \textbf{0.281}  \\
1    & 0.5 & 0.01 & 1   & 1.128                          & \textbf{0.405} & 1.022                            & \textbf{0.390}  \\
0.25 & 1   & 0.01 & 1   & 1.178                          & \textbf{0.213} & 1.026                            & \textbf{0.201}  \\
0.5  & 1   & 0.01 & 1   & 1.206                          & \textbf{0.299} & 1.025                            & \textbf{0.281}  \\
1    & 1   & 0.01 & 1   & 1.216                          & \textbf{0.412} & 1.023                            & \textbf{0.390}  \\
0.25 & 0.1 & 0.02 & 1.2 & 1.110                          & \textbf{0.143} & 1.024                            & \textbf{0.135}  \\
0.5  & 0.1 & 0.02 & 1.2 & 1.103                          & \textbf{0.231} & 1.022                            & \textbf{0.220}  \\
1    & 0.1 & 0.02 & 1.2 & 1.087                          & \textbf{0.349} & 1.021                            & \textbf{0.338}  \\
0.25 & 0.5 & 0.02 & 1.2 & 1.133                          & \textbf{0.145} & 1.025                            & \textbf{0.135}  \\
0.5  & 0.5 & 0.02 & 1.2 & 1.128                          & \textbf{0.233} & 1.023                            & \textbf{0.220}  \\
1    & 0.5 & 0.02 & 1.2 & 1.115                          & \textbf{0.352} & 1.021                            & \textbf{0.338}  \\
0.25 & 1   & 0.02 & 1.2 & 1.162                          & \textbf{0.147} & 1.027                            & \textbf{0.135}  \\
0.5  & 1   & 0.02 & 1.2 & 1.201                          & \textbf{0.239} & 1.025                            & \textbf{0.220}  \\
1    & 1   & 0.02 & 1.2 & 1.255                          & \textbf{0.367} & 1.023                            & \textbf{0.338}  \\

\hline
\end{tabular*}
\end{table}

We may observe that, under faster mean-reversion, the average
volatility $\hat{\mathbb{E}}\bar{\sigma}^2_{m}$ and, consequently, the
price of the option are lower, which is exactly what is expected from
the model.

Tables~\ref{tabConv1} and \ref{tabConv2} illustrate how the price of
the option changes with the decrease of time step in discrete model.

%
\begin{table}[t]
\centering
\caption{$\sigma^2(Y_s)=|Y_s|+0.2$, $K=1$, $r=0.02$, $k=0.1$, $T=1$.
Convergence}
\begin{tabular*}{\textwidth}{@{\extracolsep{\fill}}cD{.}{.}{3.0}cccc@{}}
\hline\\[-8pt]
$\Delta t$ & \multicolumn{1}{c}{$\alpha$} & $\hat{\mathbb{E}}\bar{\sigma}^2_{m}$ & $\bar{d}^{(m)}_1$ & $\bar{d}^{(m)}_2$ & $\hat{\mathbb{E}}\hat{P}_{m}$ \\[1pt]
\hline
$10^{-2}$ & 1 & 0.272367 & 0.299043 & $-$0.222056 & 0.213552 \\
$10^{-3}$ & 1 & 0.271043 & 0.298506 & $-$0.221338 & 0.213073 \\
$10^{-4}$ & 1 & 0.272534 & 0.299123 & $-$0.222179 & 0.213631 \\
$10^{-5}$ & 1 & 0.271837 & 0.298822 & $-$0.221753 & 0.213351 \\
$10^{-6}$ & 1 & 0.271421 & 0.298667 & $-$0.221560 & 0.213220 \\
$10^{-2}$ & 100 & 0.208910 & 0.272291 & $-$0.184776 & 0.189047 \\
$10^{-3}$ & 100 & 0.206599 & 0.271267 & $-$0.183264 & 0.188073 \\
$10^{-4}$ & 100 & 0.206439 & 0.271196 & $-$0.183159 & 0.188005 \\
$10^{-5}$ & 100 & 0.206413 & 0.271184 & $-$0.183142 & 0.187994 \\
$10^{-6}$ & 100 & 0.206443 & 0.271198 & $-$0.183162 & 0.188007 \\
\hline
\end{tabular*}
\label{tabConv1}%
\end{table}
%

%
\begin{table}[t]
\centering
\caption{$\sigma^2(Y_s)=\operatorname{e}^{Y_s}+0.2$, $K=1$, $r=0.02$, $k=0.1$, $T=1$. Convergence}
\begin{tabular*}{\textwidth}{@{\extracolsep{\fill}}cD{.}{.}{3.0}cccc@{}}
\hline\\[-8pt]
$\Delta t$ & \multicolumn{1}{c}{$\alpha$} & $\hat{\mathbb{E}}\bar{\sigma}^2_{m}$ & $\bar{d}^{(m)}_1$ & $\bar{d}^{(m)}_2$ & $\hat{\mathbb{E}}\hat{P}_{m}$ \\[1pt]
\hline
$10^{-2}$ & 1 & 1.265414 & 0.556279 &   $-$0.519082 & 0.431865 \\
$10^{-3}$ & 1 & 1.269504 & 0.579243 &   $-$0.543620 & 0.432472 \\
$10^{-4}$ & 1 & 1.266274 & 0.584925 &   $-$0.549670 & 0.431990 \\
$10^{-5}$ & 1 & 1.266030 & 0.566934 &   $-$0.530485 & 0.431948 \\
$10^{-6}$ & 1 & 1.265635 & 0.576169 &   $-$0.540343 & 0.431892 \\
$10^{-2}$ & 100 & 1.201083 & 0.566092 & $-$0.529585 & 0.422128 \\
$10^{-3}$ & 100 & 1.201047 & 0.566500 & $-$0.530021 & 0.422123 \\
$10^{-4}$ & 100 & 1.201026 & 0.566203 & $-$0.529703 & 0.422120 \\
$10^{-5}$ & 100 & 1.201036 & 0.566693 & $-$0.530228 & 0.422121 \\
$10^{-6}$ & 100 & 1.201023 & 0.566052 & $-$0.529542 & 0.422119 \\
\hline
\end{tabular*}
\label{tabConv2}%
\end{table}

In view of Section~\ref{sec3}, it is also of certain interest to compare
calculations obtained over one trajectory but under different
discretization steps. We constructed 2000 trajectories with time-step
size of $10^{-6}$: 1000 for the case $\alpha=1$ and 1000 for the case
$\alpha=100$. These trajectories are considered to be ``true''
continuous-time trajectories of the Ornstein--Uhlenbeck process $Y_t$.
The corresponding values of $\bar{\sigma}^2_{m}$ are considered to be
``true'' continuous-time values of~$\bar{\sigma}^2$. The calculations
were then performed for wider discretization intervals using the points
of constructed trajectories. Thus, the samples of discretization errors
for $\bar{\sigma}^2_{m}$ were derived. Probably, the estimate of $\bar
{\sigma}^2_{m}$ is more valuable in such context since one would not
usually calculate the price of an option over one trajectory. However,
the estimate of volatility is usually derived from past data, which is
in essence one distinct realization of the space of all possible scenarios.

Tables~\ref{tabErr1} and \ref{tabErr2} provide characteristics of the
samples of discretization errors. Errors are measured as a percentage
of the ``true'' value.
%
\begin{table}[t!]
\tabcolsep=0pt
\caption{$\sigma^2(Y_s)=|Y_s|+0.2$, $K=1$, $r=0.02$, $k=0.1$, $T=1$.
Characteristics of sample of errors}\vspace*{-4pt}
\begin{tabular*}{\textwidth}{@{\extracolsep{\fill}}lllll@{}}
\hline\\[-8pt]
 & \multicolumn{1}{c}{$10^{-2}$} & \multicolumn{1}{c}{$10^{-3}$} & \multicolumn{1}{c}{$10^{-4}$} & \multicolumn{1}{c@{}}{$10^{-5}$} \\
\hline
& \multicolumn{4}{c}{$\alpha=1$} \\
\cline{2-5}
Average       & 0.08710\%                        & 0.00834\%  & 0.00081\%  & 0.00008\%  \\
St. error     & 0.0000427                        & 0.0000042  & 0.0000004  & 0          \\
Median        & 0.0009575                        & 0.0000834  & 0.000008   & 0.0000007  \\
St. deviation & 0.0013517                        & 0.0001334  & 0.0000137  & 0.0000013  \\
Excess        & $-$0.217306                        & $-$0.191189  & $-$0.143295  & $-$0.021156  \\
Skewness      & 0.0492335                        & $-$0.002248  & 0.023124   & 0.0577173  \\
Min           & $-$0.29706\%                       & $-$0.03669\% & $-$0.00303\% & $-$0.00036\% \\
Max           & 0.52352\%                        & 0.04766\%  & 0.00502\%  & 0.00044\%  \\
Count         & 1000                             & 1000       & 1000       & 1000       \\
\hline
              & \multicolumn{4}{c}{$\alpha=100$}                                        \\
\cline{2-5}
Average       & 0.07790\%                        & 0.00742\%  & 0.00083\%  & 0.00007\%  \\
St. error     & 0.000043                         & 0.0000044  & 0.0000004  & 0          \\
Median        & 0.0008379                        & 0.0000728  & 0.0000083  & 0.0000007  \\
St. deviation & 0.0013602                        & 0.0001379  & 0.0000136  & 0.0000014  \\
Excess        & $-$0.234452                        & $-$0.302723  & $-$0.352995  & $-$0.054568  \\
Skewness      & $-$0.024765                        & 0.0922374  & 0.0055451  & 0.0229423  \\
Min           & $-$0.30504\%                       & $-$0.03231\% & $-$0.00323\% & $-$0.00037\% \\
Max           & 0.46265\%                        & 0.04974\%  & 0.00454\%  & 0.00050\%  \\
Count         & 1000                             & 1000       & 1000       & 1000       \\

\hline
\end{tabular*}
\label{tabErr1}%
\end{table}
%

%
\begin{table}[t!]
\tabcolsep=0pt
\caption{$\sigma^2(Y_s)=\operatorname{e}^{Y_s}+0.2$, $K=1$, $r=0.02$,
$k=0.1$, $T=1$. Characteristics of sample of errors}\vspace*{-4pt}
\begin{tabular*}{\textwidth}{@{\extracolsep{\fill}}lllll@{}}
\hline\\[-8pt]
 & \multicolumn{1}{c}{$10^{-2}$} & \multicolumn{1}{c}{$10^{-3}$} & \multicolumn{1}{c}{$10^{-4}$} & \multicolumn{1}{c@{}}{$10^{-5}$} \\
\hline
              & \multicolumn{4}{c}{$\alpha=1$}                                          \\
\cline{2-5}
Average       & 0.02496\%                        & 0.00268\%  & 0.00026\%  & 0.00002\%  \\
St. error     & 0.0000113                        & 0.0000011  & 0.0000001  & 0.00000001 \\
Median        & 0.0002559                        & 0.0000266  & 0.0000027  & 0.0000002  \\
St. deviation & 0.0003584                        & 0.0000354  & 0.0000035  & 0.0000003  \\
Excess        & 0.1947561                        & 0.1687356  & $-$0.0576859 & 0.0700827  \\
Skewness      & $-$0.1691937                       & $-$0.0097185 & $-$0.1643507 & $-$0.0522007 \\
Min           & $-$0.09961\%                       & $-$0.00861\% & $-$0.00088\% & $-$0.00011\% \\
Max           & 0.12871\%                        & 0.01464\%  & 0.00126\%  & 0.00013\%  \\
Count         & 1000                             & 1000       & 1000       & 1000       \\
\hline
              & \multicolumn{4}{c}{$\alpha=100$}                                        \\
\cline{2-5}
Average       & 0.02692\%                        & 0.00268\%  & 0.00025\%  & 0.00002\%  \\
St. error     & 0.0000118                        & 0.0000012  & 0.0000001  & 0          \\
Median        & 0.0002712                        & 0.0000265  & 0.0000027  & 0.0000002  \\
St. deviation & 0.0003735                        & 0.0000377  & 0.0000036  & 0.0000003  \\
Excess        & 0.17242                          & 0.070383   & 0.3383414  & 0.0853763  \\
Skewness      & $-$0.0299531                       & $-$0.0195205 & $-$0.1914745 & $-$0.0371876 \\
Min           & $-$0.09174\%                       & $-$0.01068\% & $-$0.00112\% & $-$0.00011\% \\
Max           & 0.16291\%                        & 0.01411\%  & 0.00139\%  & 0.00014\%  \\
Count         & 1000                             & 1000       & 1000       & 1000       \\\hline
\end{tabular*}
\label{tabErr2}%
\end{table}
It can be seen from the tables that approximation results do not differ
significantly for various time-steps. Even the widest investigated
discretization interval provides acceptable precision for most applications.

\section{Checking approximation precision in
the case of deterministic volatility}\label{sec5}

In this section, we compare the option prices obtained for the Euler
scheme \eqref{Yml} with the true prices of European call option for
different sets of parameters for the case of deterministic
time-dependent volatility.

The models with deterministic time-dependent volatility are the natural
extension of the Black--Scholes model. The expression for the price of
the option is the same as in the classical model except for the fact
that, instead of constant volatility, it operates with average (or root
mean square) volatility over the time interval to maturity (see, e.g.,
\cite{book9,book10}). Thus, the formula remains similar to \eqref
{IntExp} and~\eqref{V0discr}.\vadjust{\eject}

It has been shown that deterministic volatility does not reflect the
real-world stochastic dynamics correctly \cite{paper27,paper28}, and
such models have begun falling out of favor in the mid-1980s. The shift
to stochastic volatility models was boosted by rapid development of
computational tools.

Nevertheless, deterministic volatility is suitable for the purpose of
our investigation since we can calculate the exact price of the option
for the continuous time model.

In order to analyze the deterministic time-dependent volatility case,
it looks natural to let the Brownian noise term in the definition of
$Y_t$ vanish. Thus, we get
%
\begin{equation}
\label{dYdeterm} dY_t = -\alpha Y_tdt,
\end{equation}
which is a familiar linear differential equation solved by
%
\begin{equation}
\label{Ydeterm} Y_t = Y_0\operatorname{e}^{-\alpha t}.
\end{equation}
For the same transformation functions $\sigma$ and sets of parameters
as in the previous section, we calculate the prices of European call
option in the continuous case using \eqref{IntExp} and compare it with
the prices of the same option calculated using \eqref{V0discr}--\eqref
{sigmadiscr} with
%
\begin{equation}
Y^{(m)}_{l+1}=(1-\alpha\Delta t)Y^{(m)}_{l}.
\end{equation}
We use the time step of 0.01 and only 10 simulations per combination of
inputs. As before, all calculations are performed in Matlab 7.9.0.

Table~\ref{tabDetervol} presents the results of calculations.
Comparison of two approaches reveals that the Euler--Maruyama scheme
provides a good approximation for the exact option price. In the case
of fast mean-reversion, the results coincide when rounded to sixth digit.

%
\begin{table}[htbp]
\tabcolsep=0pt
\caption{Approximate option prices versus true option prices for deterministic volatility}
\begin{tabular*}{\textwidth}{@{\extracolsep{\fill}}D{.}{.}{1.2}D{.}{.}{1.1}D{.}{.}{1.2}D{.}{.}{1.1}cD{.}{.}{1.1}cccc@{}}
\hline\\[-8pt]
\multicolumn{1}{c}{$T$} & \multicolumn{1}{c}{$\alpha$} & \multicolumn{1}{c}{$r$} & \multicolumn{1}{c}{$K$} & \multicolumn{1}{c}{$a$} & \multicolumn{1}{c}{$b$} & $\hat{\mathbb{E}}\hat{P}_{m}$ & $\mathbb{E}V$ & $\hat{\mathbb{E}}\hat{P}_{m}$ & $\mathbb{E}V$ \\
\hline
& & & & & & \multicolumn{2}{c}{$\sigma^2(Y_s)=a\|Y_s\|+b$} & \multicolumn{2}{c}{$\sigma^2(Y_s)=\operatorname{e}^{Y_s}+0.2$} \\
\cline{7-8}\cline{9-10}
0.25 & 1   & 0    & 0.8 & 1 & 0   & 0.203891 & \textbf{0.203888} & 0.316223 & \textbf{0.316220} \\
0.5  & 1   & 0    & 0.8 & 1 & 0   & 0.211556 & \textbf{0.211549} & 0.390150 & \textbf{0.390147} \\
1    & 1   & 0    & 0.8 & 1 & 0   & 0.223003 & \textbf{0.222994} & 0.490305 & \textbf{0.490302} \\
0.25 & 1   & 0.01 & 1   & 1 & 0.2 & 0.107942 & \textbf{0.107935} & 0.224736 & \textbf{0.224733} \\
0.5  & 1   & 0.01 & 1   & 1 & 0.2 & 0.150207 & \textbf{0.150199} & 0.312794 & \textbf{0.312791} \\
1    & 1   & 0.01 & 1   & 1 & 0.2 & 0.206464 & \textbf{0.206457} & 0.429067 & \textbf{0.429064} \\
0.25 & 1   & 0.02 & 1.2 & 1 & 1   & 0.141317 & \textbf{0.141313} & 0.159958 & \textbf{0.159954} \\
0.5  & 1   & 0.02 & 1.2 & 1 & 1   & 0.227633 & \textbf{0.227629} & 0.253710 & \textbf{0.253706} \\
1    & 1   & 0.02 & 1.2 & 1 & 1   & 0.345261 & \textbf{0.345257} & 0.379955 & \textbf{0.379952} \\
0.25 & 100 & 0    & 0.8 & 1 & 0   & 0.200000 & \textbf{0.200000} & 0.309950 & \textbf{0.309950} \\
0.5  & 100 & 0    & 0.8 & 1 & 0   & 0.200000 & \textbf{0.200000} & 0.382107 & \textbf{0.382106} \\
1    & 100 & 0    & 0.8 & 1 & 0   & 0.200000 & \textbf{0.200000} & 0.481610 & \textbf{0.481610} \\
0.25 & 100 & 0.01 & 1   & 1 & 0.2 & 0.091044 & \textbf{0.091044} & 0.217149 & \textbf{0.217149} \\
0.5  & 100 & 0.01 & 1   & 1 & 0.2 & 0.128449 & \textbf{0.128449} & 0.303457 & \textbf{0.303457} \\
1    & 100 & 0.01 & 1   & 1 & 0.2 & 0.181507 & \textbf{0.181507} & 0.419198 & \textbf{0.419198} \\
0.25 & 100 & 0.02 & 1.2 & 1 & 1   & 0.133108 & \textbf{0.133108} & 0.152065 & \textbf{0.152065} \\
0.5  & 100 & 0.02 & 1.2 & 1 & 1   & 0.217100 & \textbf{0.217100} & 0.243748 & \textbf{0.243748} \\
1    & 100 & 0.02 & 1.2 & 1 & 1   & 0.333759 & \textbf{0.333759} & 0.369312 & \textbf{0.369312} \\

\hline
\end{tabular*}
\label{tabDetervol}%
\end{table}

\begin{remark}
In this paper, we consider the price of the option at the initial time
moment. However, all the above considerations are applicable for any
valuation date $t$ between the initial time moment and maturity. Some
obvious changes need to be made, for example, the function
$\bar{\sigma}_t:=\sqrt{\frac{1}{T-t}\int_t^T\sigma^2(Y_s)ds} \geq0$ needs to be
introduced instead of $\bar{\sigma}$, and $T$ needs to be substituted
by $T-t$ in \eqref{IntExp}--\eqref{sigmadiscr}.
\end{remark}

\appendix

\section*{Appendix A. The Euler scheme: definitions and auxiliary results}

The reader is advised to refer to \cite{book6}, which provides in-depth
study of numerical approximations of stochastic differential equations.

Consider the stochastic differential equation
%
\begin{eqnarray}
\label{GeneralIto} dX_t=a(t,X_t)dt+b(t,X_t)dW_t,
\quad t \in[t_0, T],
\end{eqnarray}
and assume that there is a unique strong solution $X(t)$ with
$X(t_0)=X_0$. In order for this to be the case, certain assumptions
need to be made about the functions $a$ and $b$. Namely, refer to the
following assumptions (assumptions (A1)--(A4) in \cite{book6}, pp.~128--129):
\begin{itemize}
\item[A1)] $a=a(t,x)$ and $b=b(t,x)$ are jointly $L^2$-measurable in
$(t,x) \in[t_0,T] \times\mathds{R}$;
\item[A2)] the functions $a$ and $b$ satisfy the Lipschitz condition
w.r.t. $x$, that is, there exists a constant $K>0$ such that
\[
\big|a(t,x)-a(t,y)\big|\leq K|x-y|
\]
and
\[
\big|b(t,x)-b(t,y)\big|\leq K|x-y|
\]
for all $t\in[t_0,T]$ and $x,y \in\mathds{R}$;
\item[A3)] there exists a constant $K>0$ such that
\[
\big|a(t,x)\big|^2\leq K\big|1+|x|^2\big|
\]
and
\[
\big|b(t,x)\big|^2\leq K\big|1+|x|^2\big|
\]
for all $t\in[t_0,T]$ and $x,y \in\mathds{R}$;
\item[A4)] $X_{t_0}$ is $\mathcal{F}_{t_0}$-measurable with $\mathbb
{E}|X_{t_0}|^2<\infty$.
\end{itemize}
Let $X^{(m)}_t$ be a discretization scheme of the process $X_t$.
\begin{ndefin}(See \cite{book6}.)
We shall say that an approximating process $X^{(m)}_t$ converges in the
strong sense with order $\gamma\in(0, \infty]$ to the true process
$X_t$ if there exists a finite constant $K$ such that
\begin{equation*}
\mathbb{E} \bigl( \big|X_t-X^{(m)}_t\big| \bigr)\le K
m^{-\gamma}.
\end{equation*}
\end{ndefin}

The same terminology will be applied to the functions of approximating
processes.

\begin{ndefin}(See \cite{book6}.)
We shall say that a discrete time approximation scheme $X^{(m)}_t$ is
strongly consistent if there exists a nonnegative function $c=c(m)$ with
\[
\lim_{m \to\infty}c(m)=0
\]
such that
\begin{equation*}
\mathbb{E} \biggl( \biggl\llvert\mathbb{E} \biggl(\frac
{X^{(m)}_{i+1}-X^{(m)}_{i}}{T/m } \Bigm|
\mathcal{F}_{iT/m} \biggr) -a \biggl(\frac{iT}{m},X^{(m)}_i
\biggr) \biggr\rrvert^2 \biggr)\leq c(m)
\end{equation*}
and
\begin{equation*}
\mathbb{E} \biggl( \frac{m}{T} \bigg|X^{(m)}_{i+1}-X^{(m)}_i-
\mathbb{E} \bigl(X^{(m)}_{i+1}-X^{(m)}_i\big|
\mathcal{F}_{iT/m} \bigr)
\\
-b \biggl(\frac{iT}{m},X^{(m)}_i \biggr)\Delta
W_i\bigg|^2 \biggr) \leq c(m)
\end{equation*}
for all fixed values $X^{(m)}_i=y$ and $i=0,1, \dots,m$.
\end{ndefin}
%

\begin{nthm}(\xch{See \cite{book6}}{cite{book6}}, 9.6.2, p. 324.)
Let assumptions (A1)--(A4) hold for \eqref{GeneralIto}. Then a strongly
consistent equidistant-time discrete approximation $X^{(m)}$ of the
process $X$ on $[t_0,T]$, with $X^{(m)}_{t_0}=X_{t_0}$, converges
strongly to $X$.
\end{nthm}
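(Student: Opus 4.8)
The plan is the standard Gronwall argument for strong convergence of one-step approximations, in three stages: a consistency expansion of the increments, a priori moment bounds, and a Gronwall-type error recursion. Write $\Delta=T/m$, $\tau_i=t_0+i\Delta$, and for $t\in[t_0,T]$ let $n_t$ be the largest index with $\tau_{n_t}\le t$. First I would expand the one-step increments of the scheme as
\[
X^{(m)}_{\tau_{i+1}}-X^{(m)}_{\tau_i}=a\bigl(\tau_i,X^{(m)}_{\tau_i}\bigr)\Delta+b\bigl(\tau_i,X^{(m)}_{\tau_i}\bigr)\Delta W_i+\rho_i+\eta_i,
\]
where $\rho_i:=\mathbb{E}\bigl(X^{(m)}_{\tau_{i+1}}-X^{(m)}_{\tau_i}\mid\mathcal{F}_{\tau_i}\bigr)-a(\tau_i,X^{(m)}_{\tau_i})\Delta$ is $\mathcal{F}_{\tau_i}$-measurable with $\mathbb{E}|\rho_i|^2\le\Delta^2c(m)$ (first consistency inequality), and $\eta_i$ is the remainder, a martingale difference, $\mathbb{E}(\eta_i\mid\mathcal{F}_{\tau_i})=0$, with $\mathbb{E}|\eta_i|^2\le\Delta c(m)$ (second consistency inequality). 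In parallel I would record the a priori bounds: from the linear-growth assumption (A3), assumption (A4), and a discrete Gronwall inequality, $\max_{0\le i\le m}\mathbb{E}|X^{(m)}_{\tau_i}|^2\le C(1+\mathbb{E}|X_{t_0}|^2)$ with $C$ independent of $m$, along with the classical moment and moment-continuity bounds for the true solution $X$, in particular $\mathbb{E}|X_s-X_{\tau_{n_s}}|^2\le C\Delta$ on each subinterval.

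Second, I would set up the error recursion. Put $Z_n:=X_{\tau_n}-X^{(m)}_{\tau_n}$, so $Z_0=0$; subtract the expansion above from $X_{\tau_{n+1}}-X_{\tau_n}=\int_{\tau_n}^{\tau_{n+1}}a(s,X_s)\,ds+\int_{\tau_n}^{\tau_{n+1}}b(s,X_s)\,dW_s$ and sum over the steps. In the drift part I split
\[
a(s,X_s)-a\bigl(\tau_i,X^{(m)}_{\tau_i}\bigr)=\bigl[a(s,X_s)-a(s,X_{\tau_i})\bigr]+\bigl[a(s,X_{\tau_i})-a(\tau_i,X_{\tau_i})\bigr]+\bigl[a(\tau_i,X_{\tau_i})-a(\tau_i,X^{(m)}_{\tau_i})\bigr]
\]
(and likewise for $b$): the first bracket is $O(\sqrt\Delta)$ in $L^2$ by moment-continuity of $X$ and (A2); the second is governed by the modulus of continuity of the coefficients in time (and is identically $0$ for the time-homogeneous model at hand); the third is bounded by $K|Z_i|$ by the Lipschitz assumption (A2). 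The $b$-part is handled the same way with the It\^o isometry replacing the Cauchy--Schwarz bound on the time integral. The sum of the $\rho_i$ is bounded by $T^2c(m)$ in $L^2$ using that the $\rho_i$ are $\mathcal{F}_{\tau_i}$-measurable (Cauchy--Schwarz over the $m$ summands), and the sum of the $\eta_i$ by $Tc(m)$ using that they are orthogonal martingale differences. Collecting the estimates yields
\[
\mathbb{E}|Z_n|^2\le C_1\bigl(c(m)+\Delta\bigr)+C_2\Delta\sum_{k=0}^{n-1}\mathbb{E}|Z_k|^2,\qquad 0\le n\le m,
\]
with $C_1,C_2$ depending only on $T$, $K$, and the a priori moment bounds.

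Third, the discrete Gronwall inequality gives $\max_{0\le n\le m}\mathbb{E}|Z_n|^2\le C_1(c(m)+\Delta)\operatorname{e}^{C_2T}$, which tends to $0$ as $m\to\infty$ because $c(m)\to0$ by strong consistency and $\Delta=T/m\to0$. For a general $t\in[t_0,T]$, using $X^{(m)}_t=X^{(m)}_{\tau_{n_t}}$ together with moment-continuity of $X$, $\mathbb{E}|X_t-X^{(m)}_t|^2\le 2\mathbb{E}|X_t-X_{\tau_{n_t}}|^2+2\mathbb{E}|Z_{n_t}|^2\le 2C\Delta+2\max_{0\le n\le m}\mathbb{E}|Z_n|^2\to0$, whence $\mathbb{E}|X_t-X^{(m)}_t|\le(\mathbb{E}|X_t-X^{(m)}_t|^2)^{1/2}\to0$ by Cauchy--Schwarz, which is the asserted strong convergence.

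The step I expect to be the main obstacle is the bookkeeping in the error recursion: isolating cleanly the four sources of error (Lipschitz-in-space, freezing-in-time, drift defect $\rho_i$, noise defect $\eta_i$) and exploiting the tower property and martingale orthogonality so that all cross terms between $\mathcal{F}_{\tau_i}$-measurable quantities and the increments $\Delta W_i$ drop out, leaving exactly the clean form above. A secondary delicate point is the freezing-in-time term, which needs a little continuity of the coefficients in $t$ in addition to (A1) and is automatic for the time-homogeneous model of this paper. Finally, note that strong consistency supplies only $c(m)\to0$ with no attached rate, so the conclusion is qualitative strong convergence rather than convergence of any prescribed order.
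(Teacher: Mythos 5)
This theorem is not proved in the paper at all: it is quoted verbatim from Kloeden and Platen (Theorem 9.6.2) as an imported auxiliary result, and the authors only remark that the claims ``can be checked using the estimates \ldots\ by the proof of Theorem 9.6.2 and Exercise 9.6.3.'' So there is no in-paper argument to compare against; the relevant benchmark is the textbook proof, and your proposal reproduces it faithfully. The three-stage structure (consistency expansion of the increments into the Euler part plus a drift defect $\rho_i$ with $\mathbb{E}|\rho_i|^2\le\Delta^2c(m)$ and a martingale-difference noise defect $\eta_i$ with $\mathbb{E}|\eta_i|^2\le\Delta c(m)$; a priori second-moment bounds via (A3)--(A4) and discrete Gronwall; an error recursion closed by Gronwall) is exactly the standard argument, and your bookkeeping is right: Cauchy--Schwarz over the $m$ summands gives $\mathbb{E}|\sum_i\rho_i|^2\le T^2c(m)$, orthogonality of the $\eta_i$ gives $\mathbb{E}|\sum_i\eta_i|^2\le Tc(m)$, and the Lipschitz brackets feed the $C_2\Delta\sum_k\mathbb{E}|Z_k|^2$ term. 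The one genuine subtlety is the ``freezing-in-time'' bracket $a(s,X_{\tau_i})-a(\tau_i,X_{\tau_i})$: the assumptions (A1)--(A4) as reproduced in the paper only give joint measurability and Lipschitz continuity in $x$, which is not quite enough to control this term for a general time-dependent coefficient (Kloeden--Platen in fact work under additional H\"older-in-time regularity in the relevant chapter). You flag this correctly and note that it vanishes for the autonomous Ornstein--Uhlenbeck model actually used here, which is the honest way to handle it. Your closing observation that strong consistency yields only $c(m)\to0$ and hence qualitative convergence without a rate is also consistent with how the paper uses the theorem: the order-$1$ rate for $Y^{(m)}$ is obtained separately, via the Milstein identification in the Proposition that follows.
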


Evidently, the Euler scheme $Y^{(m)}$ introduced to approximate $Y$ in
Section~~\ref{sec2} satisfies all the above requirements and hence converges
strongly. Moreover, it is a well-known fact that, in general, the
convergence of the Euler approximation is of order 0.5. One can check
these propositions using the estimates of the rate of convergence
provided in \cite{book6} by the proof of Theorem 9.6.2 and Exercise 9.6.3.

However, our case is more specific since $Y^{(m)}$ approximates the
diffusion process with additive noise, that is, $b(t,x)=k$ is
constant.Hence, the following proposition holds.

\begin{ntve}
\label{Conv1}
$Y^{(m)}$ is the Milstein scheme and thus converges strongly with order 1.
\end{ntve}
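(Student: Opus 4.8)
The plan is to verify the two claims in the proposition separately: first that the Euler--Maruyama scheme \eqref{Yml} coincides with the Milstein scheme for equation \eqref{ModelA1}, and then that the Milstein scheme for this equation converges strongly with order $1$. For the first claim I would recall the general form of the Milstein scheme for an It\^o SDE $dX_t=a(t,X_t)dt+b(t,X_t)dW_t$: the Milstein recursion adds to the Euler recursion the correction term $\frac12 b(t,X^{(m)}_i)\,\partial_x b(t,X^{(m)}_i)\bigl((\Delta W_i)^2-\Delta t\bigr)$. In our case $a(t,x)=-\alpha x$ and $b(t,x)=k$ is constant in $x$, so $\partial_x b\equiv0$ and the Milstein correction term vanishes identically. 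Hence the Milstein recursion reduces to exactly $Y^{(m)}_{l+1}=(1-\alpha\Delta t)Y^{(m)}_l+k\,\Delta Z^{\mathbb{Q}}_l$, which is \eqref{Yml}. This establishes that the two schemes coincide.

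For the second claim I would invoke the standard result (see \cite{book6}, Theorem~10.3.5) that the Milstein scheme converges strongly with order $1$ provided the drift $a$ and diffusion $b$ satisfy suitable regularity conditions --- Lipschitz continuity and linear growth in $x$, together with enough smoothness of $b$ and of the coefficient appearing in the correction term. Here $a(t,x)=-\alpha x$ is globally Lipschitz and of linear growth, and $b(t,x)=k$ is constant, hence trivially Lipschitz, of linear growth, and infinitely differentiable; moreover assumption (A4) holds since $Y_0$ is deterministic. Therefore all hypotheses of the order-$1$ strong convergence theorem for the Milstein scheme are met, and we conclude $\mathbb{E}\bigl(|Y_t-Y^{(m)}_t|\bigr)\le C m^{-1}$ for some finite constant $C$.

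Alternatively, and perhaps more transparently, one can prove the order-$1$ estimate directly for this linear additive-noise equation without citing the general Milstein theorem. The key observation is that on each subinterval $[lT/m,(l+1)T/m)$ the exact solution satisfies, from \eqref{OU},
\[
Y_{(l+1)T/m}=\operatorname{e}^{-\alpha\Delta t}Y_{lT/m}+k\int_{lT/m}^{(l+1)T/m}\operatorname{e}^{-\alpha((l+1)T/m-s)}dZ^{\mathbb{Q}}_s,
\]
while the scheme gives $Y^{(m)}_{l+1}=(1-\alpha\Delta t)Y^{(m)}_l+k\,\Delta Z^{\mathbb{Q}}_l$. Subtracting, the error $e_l:=Y_{lT/m}-Y^{(m)}_l$ obeys a recursion whose homogeneous part has factor $\operatorname{e}^{-\alpha\Delta t}$ and whose inhomogeneity splits into a deterministic part $(\operatorname{e}^{-\alpha\Delta t}-(1-\alpha\Delta t))Y_{lT/m}$, which is $O((\Delta t)^2)=O(m^{-2})$ per step, and a stochastic part $k\int_{lT/m}^{(l+1)T/m}(\operatorname{e}^{-\alpha((l+1)T/m-s)}-1)dZ^{\mathbb{Q}}_s$, whose $L^2$-norm is $O((\Delta t)^{3/2})=O(m^{-3/2})$ per step and which is a martingale difference. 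Summing over the $m$ steps --- using the triangle inequality for the deterministic contributions (giving $m\cdot O(m^{-2})=O(m^{-1})$) and orthogonality of the martingale increments for the stochastic contributions (giving $(m\cdot O(m^{-3}))^{1/2}=O(m^{-1})$), together with the uniform-in-$m$ bound on $\mathbb{E}|Y_{lT/m}|$ --- yields $\bigl(\mathbb{E}|e_l|^2\bigr)^{1/2}\le C m^{-1}$ uniformly in $l$, and then $\mathbb{E}|Y_t-Y^{(m)}_t|\le C m^{-1}$ by Jensen's inequality and the fact that $Y^{(m)}_t$ is piecewise constant. The main point requiring care is the bookkeeping in the summation of the per-step errors: one must keep the deterministic and stochastic parts separate and exploit martingale orthogonality for the latter, since a crude triangle-inequality bound on the stochastic part would only give the suboptimal rate $m^{-1/2}$; invoking the general Milstein theorem sidesteps this but at the cost of checking its (in our setting trivial) regularity hypotheses.
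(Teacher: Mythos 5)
Your main argument is exactly the paper's: the Milstein correction term $\frac{1}{2}bb'\bigl((\Delta Z^{\mathbb{Q}})^2-T/m\bigr)$ vanishes because $b\equiv k$ is constant in $x$, so the Euler and Milstein recursions coincide, and the standard order-$1$ strong convergence theorem for the Milstein scheme in \cite{book6} (whose Lipschitz/linear-growth hypotheses are trivially satisfied here) gives the claim --- the paper does no more than this. Your alternative direct error-recursion argument is a genuine addition not present in the paper and is sound at the grid points; the one caveat is the final interpolation step, since for the piecewise-constant extension $Y^{(m)}_t$ at a non-grid time $t$ the increment $\mathbb{E}\llvert Y_t-Y_{[tm/T]T/m}\rrvert$ is only $O(m^{-1/2})$, so the order-$1$ bound should be understood as holding at the discretization points, which is the convention used by the cited theorem.
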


Really, the only difference in representation of $Y^{(m)}$ as the
Milstein scheme compared to the Euler one is in the additional summand
of the form
\[
\frac{1}{2}bb'\bigl(\bigl(\Delta Z^\mathbb{Q}
\bigr)^2-T/m\bigr),
\]
which is identically zero for the constant function $b$. The Milstein
scheme is known to converge with order 1 (see, e.g., \cite{book6},
Theorem 10.6.3, p.~361).

%

%
\end{document}